\newtheorem{proposition}{Proposition}
\newtheorem{remark}{Remark}
\begin{document}
%
\title{An Adaptive Transmission Protocol for Wireless-Powered Cooperative Communications}
\author{Yifan Gu, He~(Henry)~Chen, Yonghui Li, and Branka Vucetic
\\
School of Electrical and Information Engineering, The University of Sydney, Sydney, NSW 2006, Australia\\
Email: yigu6254@uni.sydney.edu.au, \{he.chen, yonghui.li, branka.vucetic\}@sydney.edu.au
}

\maketitle

\begin{abstract}
In this paper, we consider a wireless-powered cooperative communication network, which consists of one hybrid access point (AP), one source and one relay to assist information transmission. Unlike conventional cooperative networks, the source and relay are assumed to have no embedded energy supplies in the considered system. Hence, they need to first harvest energy from the radio-frequency (RF) signals radiated by the AP in the downlink (DL) before information transmission in the uplink (UL). Inspired by the recently proposed harvest-then-transmit (HTT) and harvest-then-cooperate (HTC) protocols, we develop a new adaptive transmission (AT) protocol. In the proposed protocol, at the beginning of each transmission block, the AP charges the source. AP and source then perform channel estimation to acquire the channel state information (CSI) between them. Based on the CSI estimate, the AP adaptively chooses the source to perform UL information transmission either directly or cooperatively with the relay. We derive an approximate closed-form expression for the average throughput of the proposed AT protocol over Nakagami-m fading channels. The analysis is then verified by Monte Carlo simulations. Results show that the proposed AT protocol considerably outperforms both the HTT and HTC protocols.
\end{abstract}

\begin{IEEEkeywords}
RF energy harvesting, cooperative communications, adaptive transmission, average throughput, Nakagami-m fading.
\end{IEEEkeywords}

\IEEEpeerreviewmaketitle
\section{Introduction}
Conventional energy harvesting techniques harvest energy from external sources such as solar power, wind energy, etc\cite{background1,background2}. However, these power sources are not part of the communication network and may introduce extra complexity to the system. Recently, a novel RF energy harvesting technique has been proposed that the device can collect energy from the radio frequency (RF) signals \cite{background3,background4}. A new kind of networks exploiting RF energy harvesting, named wireless-powered communication networks (WPCNs), have emerged and drawn more and more attention. In a WPCN, wireless devices are only powered by the RF signals. In \cite{harvest-then-transmit}, a WPCN consists of one hybrid access point (AP) and a set of users with no embedded energy supplies was investigated. A harvest-then-transmit (HTT) protocol was proposed. In HTT, the users harvest energy from the AP in the downlink (DL) before transmitting information in the uplink (UL).

\cite{chen2014harvest} developed a framework of wireless-powered cooperative communication networks (WPCCNs), where some energy harvesting relays are deployed to improve the performance of WPCCNs.  A typical three-node WPCCN consists of one hybrid AP, one information source (S) and one relay (R) to assist information transmission. Based on this three-node WPCCN, a harvest-then-cooperate (HTC) protocol was developed in \cite{chen2014harvest}, in which the source and relay harvest energy in the DL at the same time and work cooperatively for UL information transmission. Considering a delay-limited transmission model and selection combining technique at the AP, the average throughput of the WPCCNs was analyzed 
by evaluating the system outage probability with a fixed transmission rate at the source. It is shown in [6] that the HTC protocol can yield considerable performance gain compared to the HTT protocol, by introducing the cooperation between the source and relay. Almost at the same time, similar cooperative schemes were proposed for WPCNs with different setups in \cite{ju2014user,chen2014wireless}.

However, when we compare the HTT with HTC protocols for instantaneous channel realizations, we find that for some channel realizations, outages occur for the HTC protocol, but not for the HTT protocol. This means that the HTC does not always outperform the HTT in all channel realizations. Motivated by this, we develop an adaptive transmission (AT) protocol, where the AP adaptively chooses the source to perform UL information transmission either directly or cooperatively with the relay. To make this decision, the AP only needs to know the channel information  between itself and the source, which can be estimated at the beginning of each transmission block.

The main contributions of this paper are summarized as follows: {\textbf{(1)}} We propose an adaptive transmission protocol for the WPCCNs, in which the HTT and HTC protocols are adaptively adopted based on the channel information\footnote{Here, we consider the case that at the beginning of each transmission block, the AP first transfers a certain amount of energy to the source that is dedicated to acquire the channel information between them. After this duration, the AP delivers wireless energy again, which will be used for UL information transmission. For the purpose of exploration, we assume that the time duration for this channel information acquisition is very short such that it can be ignored compared with the whole block.} between the AP and the source. {\textbf{(2)}} Considering the delay-limited transmission scheme \cite{Delay-limitedthroughput} and the amplify-and-forward protocol \cite{SC} employed at the relay, we derive an approximate closed-form expression for the average throughput of the proposed protocol by evaluating the outage probability over Nakagami-m fading channels. {\textbf{(3)}} To perform the performance comparison, we also analyze the average throughput performance of the HTT and HTC protocols over Nakagami-m fading channels. {\textbf{(4)}} Numerical simulations are performed to validate the analytical results. Results show that the proposed AT protocol considerably outperforms both the HTT and HTC protocols.
\begin{figure}
\centering \scalebox{0.5}{\includegraphics{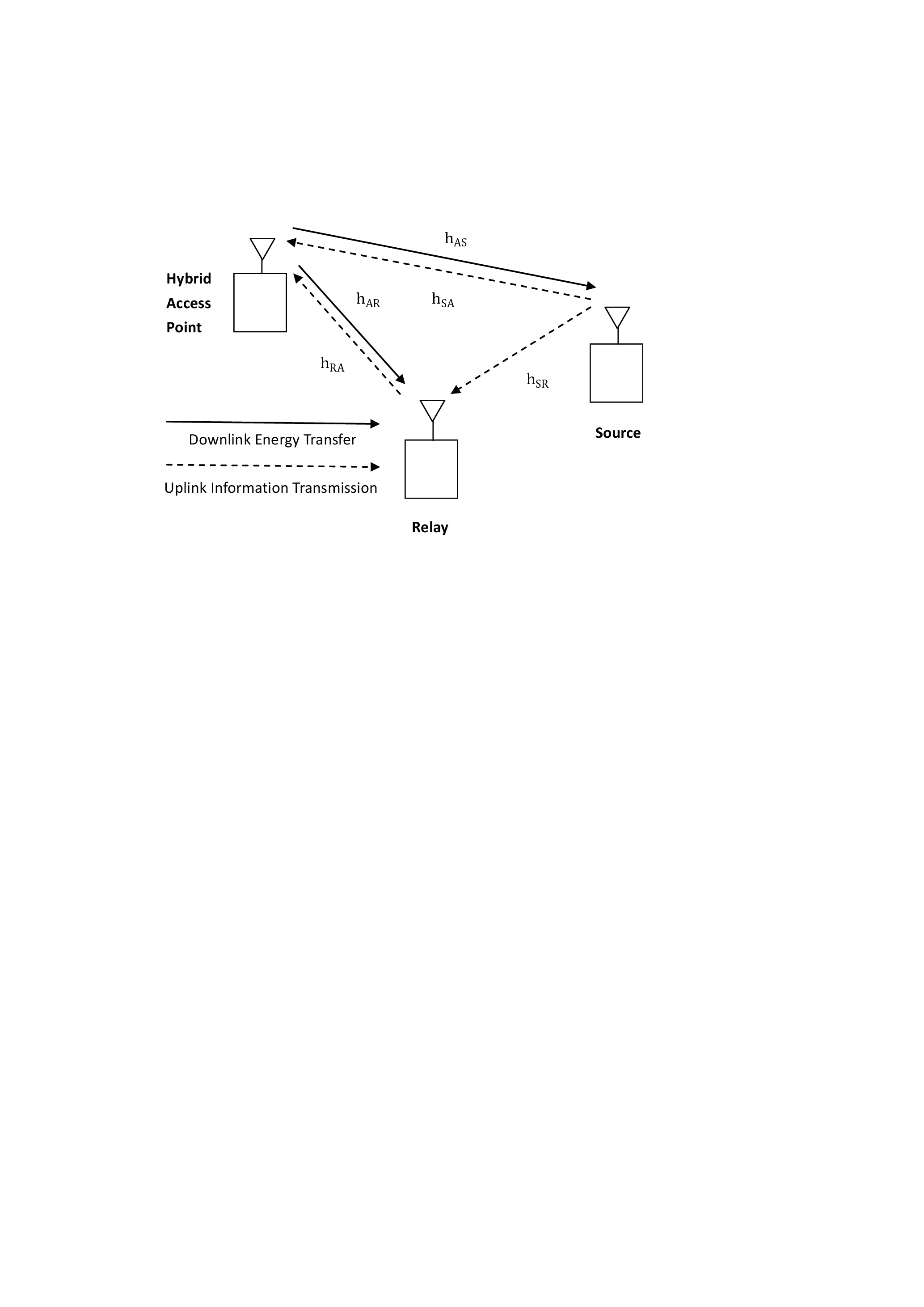}}
\caption{System model of a WPCCN. \label{fig:WPCCN}}
\end{figure}

\section{System Model}
As shown in Fig.\ref{fig:WPCCN}, we consider a three-node WPCCN with wireless energy transfer (WET) in the downlink (DL) and wireless information transmission (WIT) in the uplink (UL). The network consists of one hybrid AP, one information source (S) and one relay (R) to assist information transmission. The hybrid AP is connected to external power supplies while the source and relay are assumed to have no energy supplies. However, their batteries could store the energy harvested from the RF signals broadcast by the AP. Throughout the paper, we use subscript-$A$ for hybrid AP, subscript-$S$ for source and subscript-$R$ for relay. We consider that the channel between any two nodes $X$ and $Y$ suffers from Nakagami-m fading with fading severity parameter\footnote{For the purpose of exploration, we consider that all the fading severity parameters are integers in this paper.} ${m_{XY}}$ and average power ${\Omega _{XY}}$, where $X,Y \in \left\{ {A,S,R} \right\}$. Note that the Nakagami-m fading is an important channel model that can capture the physical channel phenomena more accurately than Rayleigh and Rician models. Moreover, we use ${h_{XY}}$ to denote the channel power gain from node $X$ to node $Y$. The probability density function (PDF) and cumulative density function (CDF) of $h_{XY}$ can be expressed as \cite[eq.2.21]{fadingchannels}
\begin{equation}\label{PdfPowerGain}
f_{h_{XY}}(x) = {{{{\beta_{XY}} ^{m_{XY}}}} \over {\Gamma (m_{XY})}}{x^{{m_{XY}} - 1}}\exp ( - {\beta_{XY}} x),
\end{equation}
\begin{equation}\label{CDFPowerGain}
{F_{h_{XY}}}\left( x \right) = {{\gamma \left( {m_{XY},\beta_{XY} x} \right)} \over {\Gamma \left( m_{XY} \right)}},
\end{equation}
where $\Gamma \left( z \right) = \int_0^\infty  {{t^{z - 1}}\exp \left( { - t} \right)} dt$ is the gamma function, $\beta_{XY}  = {m_{XY} \mathord{\left/
 {\vphantom {m {\overline h }}} \right.
 \kern-\nulldelimiterspace} {\Omega_{XY} }}$ and $\gamma \left( {v,x} \right) = \int_0^x {{t^{v - 1}}\exp \left( { - t} \right)dt} $ is the lower incomplete gamma function. Besides, we assume that all channels experience slow, independent and frequency flat fading that the channel power gain remains unchanged but can be different from one transmission block to another.

The time diagrams of HTT, HTC and AT protocols are depicted in Fig.\ref{fig:time diagram}. In all three protocols, the first $\tau T$ amount of each transmission block $T$ is allocated to the DL energy transfer. The harvested energy at the source and relay can thus be expressed as  \cite{harvest-then-transmit}
\begin{equation}\label{EatS}
{E_S} = \eta \tau T{P_A}{h_{AS}},
\end{equation}
\begin{equation}\label{EatR}
{E_R} = \eta \tau T{P_A}{h_{AR}},
\end{equation}
where $0 < \eta  < 1$ denotes the energy conversion efficiency and ${P_A}$ is the transmit power of the AP. We consider that the source and relay will exhaust the harvested energy in the subsequent UL information transmission phase. For simplicity, we assume a normalized transmission block, i.e., $T = 1$, in the rest of this paper. In the following, we will present the received signal-to-noise ratio (SNR) at the AP for the considered three protocols:
\begin{figure}
\centering
 \subfigure[HTT protocol]
  {\scalebox{0.25}{\includegraphics {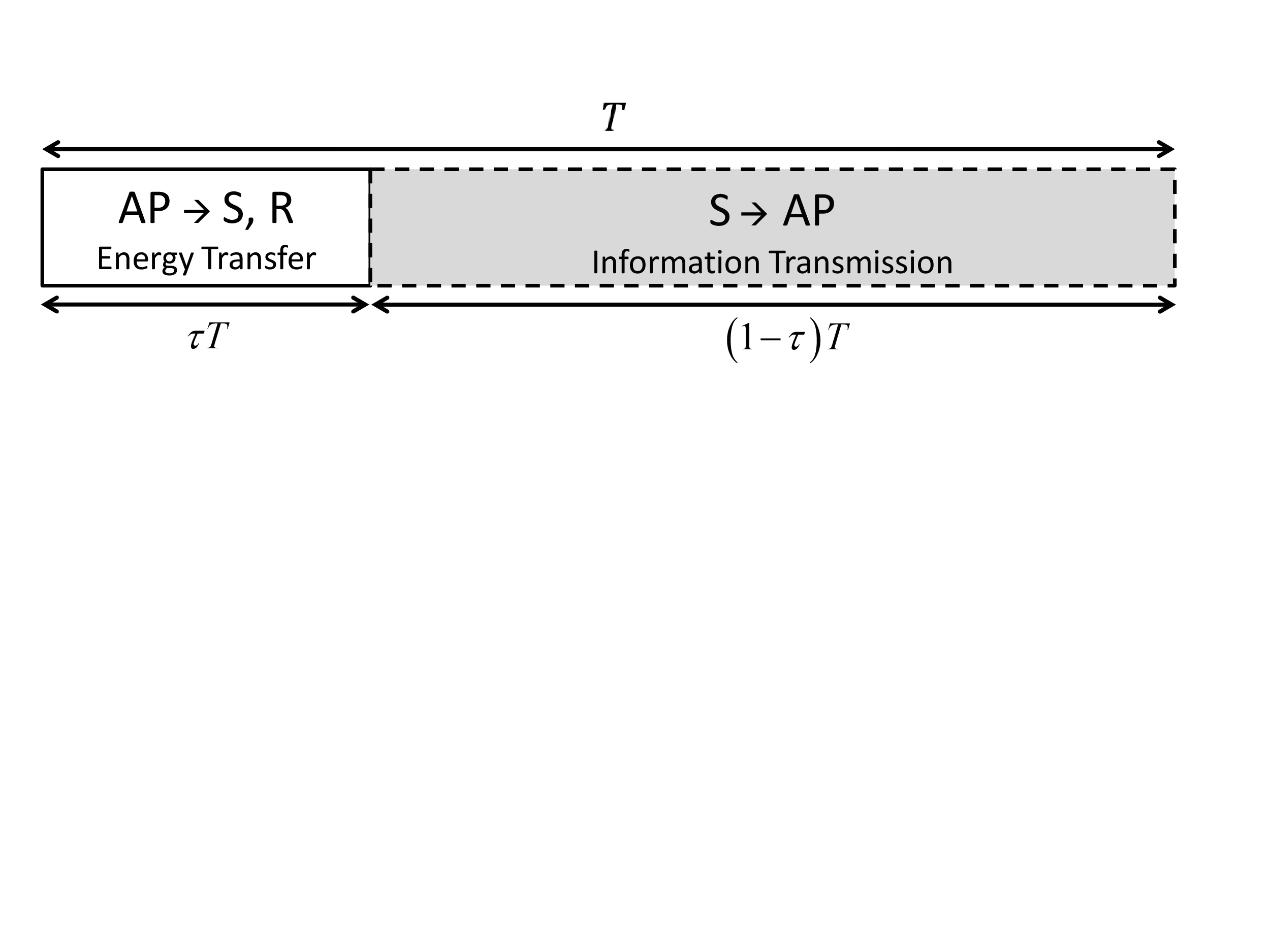}
  \label{fig:time diagram_a}}}
\hfil
 \subfigure[HTC protocol]
  {\scalebox{0.25}{\includegraphics {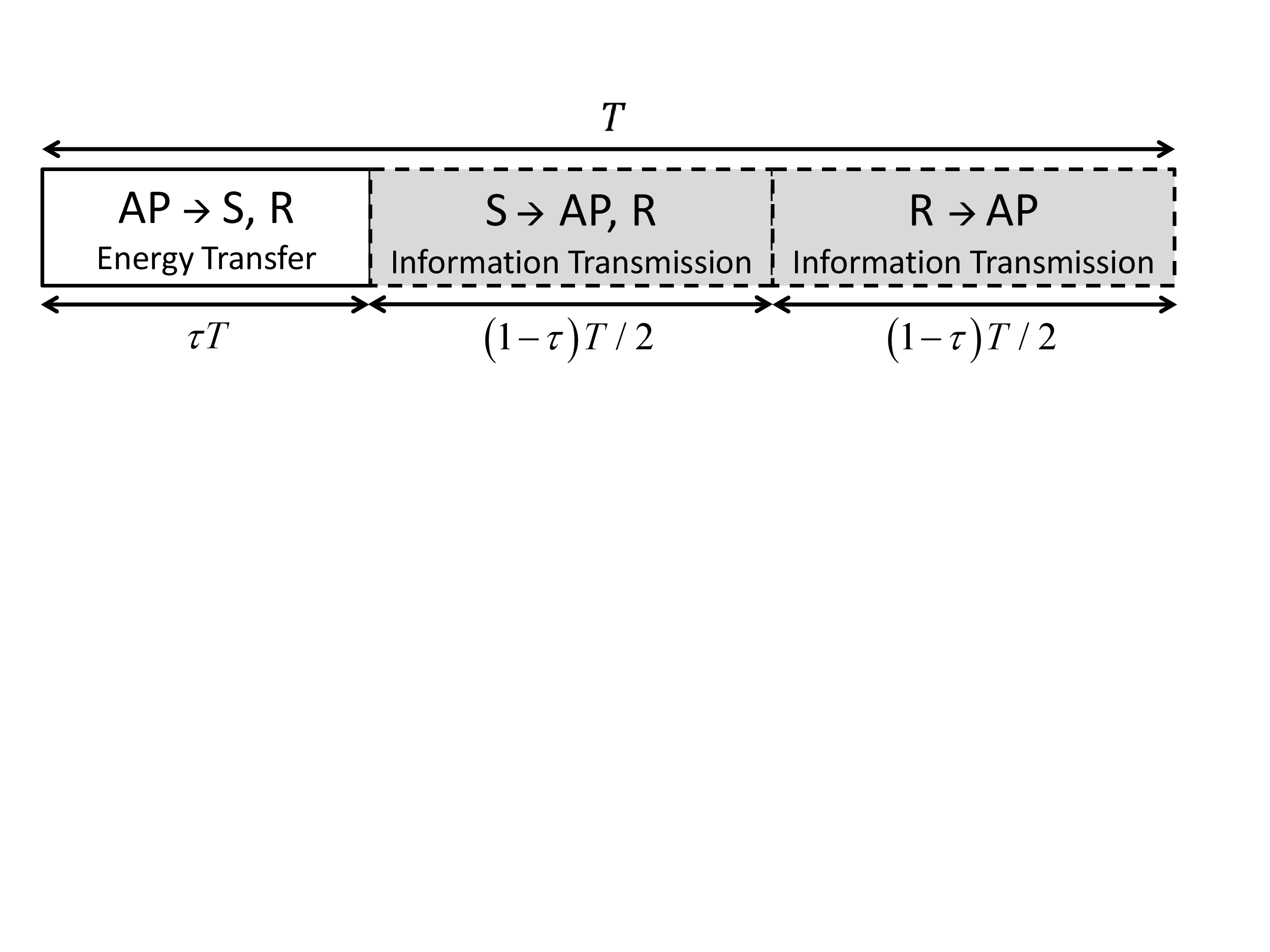}
\label{fig:time diagram_b}}}
\hfil
 \subfigure[AT protocol]
  {\scalebox{0.25}{\includegraphics {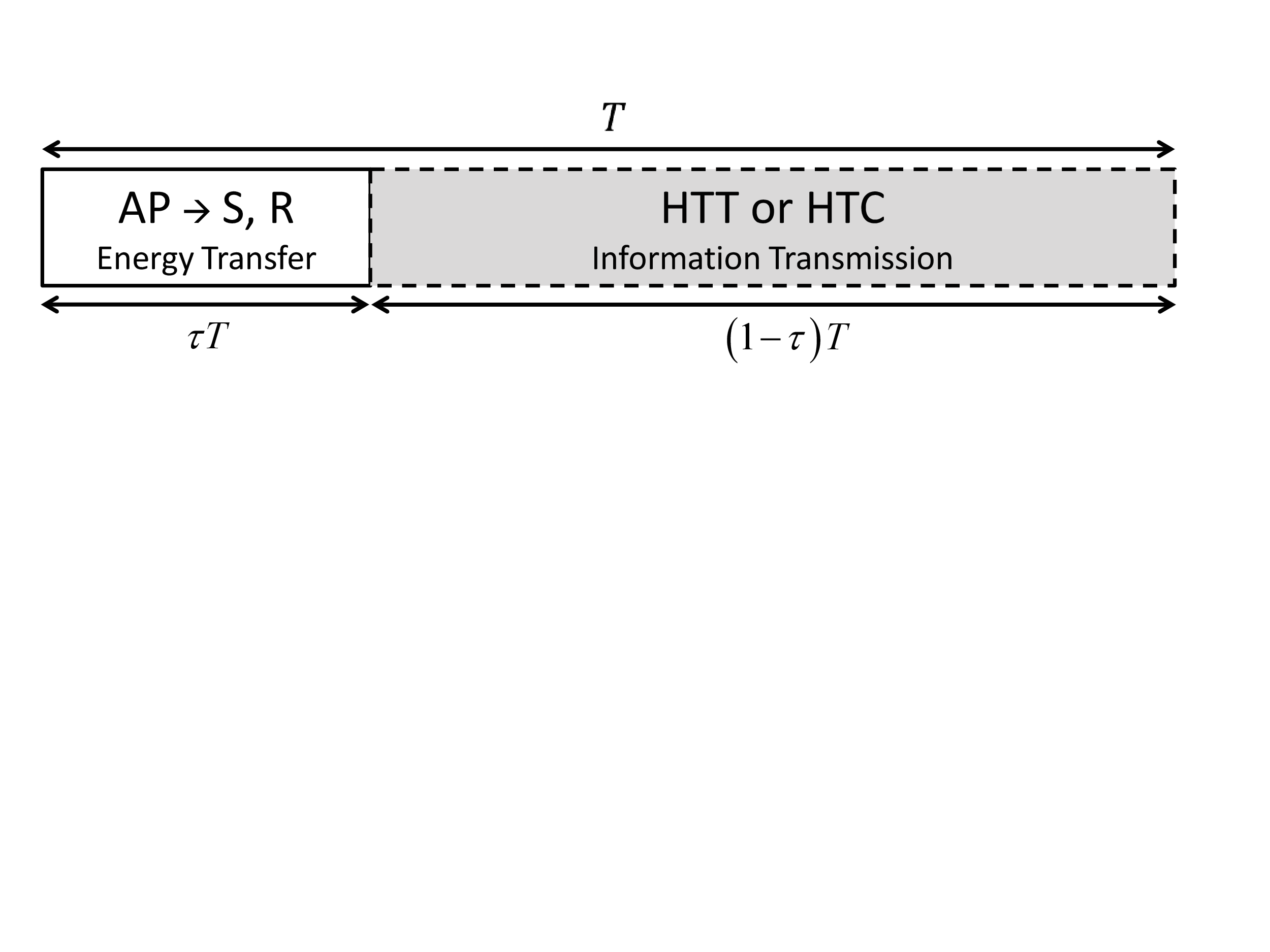}
\label{fig:time diagram_c}}}
\caption{Time diagrams of HTT, HTC and AT protocols.
\label{fig:time diagram}}
\end{figure}
%
%
\subsection{HTT protocol}
In the HTT protocol, after energy harvesting phase, the source will transmit information to the AP during the remaining time. The transmit power of the source is thus given by
\begin{equation}\
{P_{S,1}} = {E_S}/(1 - \tau)=\eta \tau {P_A}{h_{AS}}/(1 - \tau ).
\end{equation}
The received SNR at the AP can be expressed as
\begin{equation}\label{HTT-Ra1}
{\gamma _{A,1}} = {{{P_{S,1}}{h_{SA}}} \mathord{\left/
 {\vphantom {{{P_s}{h_{SA}}} {{N_0} = }}} \right.
 \kern-\nulldelimiterspace} {{N_0} = }}\mu_1 {h_{AS}}{h_{SA}},
\end{equation}
where ${N_0}$ is the power of the noise suffered by all receivers and
\begin{equation}\label{miu1}
\mu_1  = \eta ({P_A}/{N_0})\tau /(1 - \tau ).
\end{equation}
\subsection{HTC protocol}
In the HTC protocol, the remaining time after DL energy transfer is further divided into two time slots with equal length ${{\left( {1{\rm{ - }}\tau } \right)} \over 2}$. The first time slot is used for $S$ to transmit information to $R$ while the second time slot is used for $R$ to forward information to AP. Amplify-and-forward (AF) relaying is assumed to implement at the relay due to its simplicity and selection combining (SC) technique \cite{fadingchannels} is employed for information processing at the AP. Similarly, the transmit power of the source and relay can be expressed as
\begin{equation}\
{P_{S,2}} = {E_S}/[(1 - \tau)/2]=2\eta \tau {P_A}{h_{AS}}/(1 - \tau ),
\end{equation}
\begin{equation}\label{relaypower}
{P_{R,2}} = {E_R}/[(1 - \tau)/2]=2\eta \tau {P_A}{h_{AR}}/(1 - \tau ).
\end{equation}
The received SNR at the AP from the S-AP link can be written as
\begin{equation}\label{snr-s-ap}
{\gamma _{SA,2}} = {{{P_{S,2}}{h_{SA}}} \mathord{\left/
 {\vphantom {{{P_s}{h_{SA}}} {{N_0} = }}} \right.
 \kern-\nulldelimiterspace} {{N_0} = }}\mu_2 {h_{AS}}{h_{SA}},
\end{equation}
where
\begin{equation}\label{miu2}
\mu_2  = 2\eta ({P_A}/{N_0})\tau /(1 - \tau ).
\end{equation}

At the same time, the signal sent by the source can also be overheard by the relay. The relay will amplify and forward the received signal to AP using the power ${P_{R,2}}$ given in (\ref{relaypower}) with amplification factor $\beta {\rm{ = }}{1 \mathord{\left/
 {\vphantom {1 {\sqrt {{P_{S,2}}{h_{SR}} + {N_0}} }}} \right.
 \kern-\nulldelimiterspace} {\sqrt {{P_{S,2}}{h_{SR}} + {N_0}} }}$ \cite{amp-factor}. After some algebraic manipulations, we can express the received SNR  from S-R-A link as
\begin{equation}\label{Rsra}
{\gamma _{SRA,2}} = {{{\mu _2}{h_{AS}}{h_{SR}}{h_{AR}}{h_{RA}}} \over {{h_{AS}}{h_{SR}} + {h_{AR}}{h_{RA}} + 1/\mu_2}}.
\end{equation}
With the SC technique, the output SNR of the HTC protocol is given by
\begin{equation}\label{rA2}
{\gamma _{A,2}} = \max ({\gamma _{SA,2}},{\gamma _{SRA,2}}).
\end{equation}
\subsection{Adaptive transmission (AT) protocol}
In the proposed AT protocol, the AP can adaptively choose the HTT and HTC protocols. More specifically, according to its transmit power and the channel information $h_{AS}h_{SA}$, the AP can judge whether an outage will occur when the HTT protocol is implemented at the source. If an outage happens, the HTC protocol will be activated, otherwise the HTT protocol will be adopted.

We first characterize the threshold on the product $h_{AS}h_{SA}$, which can guarantee that no outage occurs for the HTT protocol. The mutual information between the source and AP can be written as
\begin{equation}\label{mutualHTT}
{I_{SA}^{HTT}} = {\log _2}\left( {1 + {\gamma _{A,1}}} \right).
\end{equation}
Let $R$ denote the transmission rate of the source. Then there is no outage for the HTT protocol when the following condition holds
\begin{equation}\label{criterion}
{I_{SA}^{HTT}}>R.
\end{equation}
Substituting (\ref{HTT-Ra1}) to (\ref{criterion}) and we have
\begin{equation}
{h_{AS}}{h_{SA}} > {{{\upsilon _1}} \over {{\mu _1}}},
\end{equation}
where $\upsilon_1  = {2^{R}} - 1$. Thus, we can summarize that in the AT protocol, HTT protocol is implemented if ${h_{AS}}{h_{SA}} > {{{\upsilon _1}} \over {{\mu _1}}}$ while HTC protocol is employed if ${h_{AS}}{h_{SA}} \le {{{\upsilon _1}} \over {{\mu _1}}}$. Then, the output SNR at the AP with this AT protocol can be characterized by the following two cases:
\subsubsection{${h_{AS}}{h_{SA}} > {{{\upsilon _1}} \over {{\mu _1}}}$}
The HTT protocol is implemented and the received SNR at the AP is give by
\begin{equation}\label{rA3_HTT}
{\gamma _{A,3}} = {\gamma _{A,1}}.
\end{equation}
Note that no outage will occur in this case.
\subsubsection{${h_{AS}}{h_{SA}} \le {{{\upsilon _1}} \over {{\mu _1}}}$}
The HTC protocol is adopted in this case. We thus can express the received SNR at the AP as
\begin{equation}\label{rA3}
{\gamma _{A,3}} = {\gamma _{A,2}}.
\end{equation}
\section{Throughput Analysis}
In this paper, we consider a delay-limited transmission mode \cite{Delay-limitedthroughput}, where the throughput can be determined by evaluating the outage probability with a fixed transmission rate. To proceed, we first characterize the outage probabilities of HTT, HTC and AT protocols over Nakagami-m fading channels and have the following propositions.
\begin{proposition}\label{propHTT}
The expression of the outage probability of the HTT protocol over Nakagami-m fading channels can be expressed as
\begin{equation}\label{pouthtt}
P_{out}^{HTT}= 1 - {S_{{m_{AS}},{m_{SA}}}}\left( {{{{\beta _{AS}}{\beta _{SA}}{\upsilon _1}} \over {{\mu _1}}}} \right),
\end{equation}
where
\begin{equation}
{S_{{m_1},{m_2}}}\left( x \right) = {2 \over {\Gamma \left( {{m_2}} \right)}}\sum\limits_{i = 0}^{{m_1} - 1} {{{{x^{{{{m_2} + i} \over 2}}}} \over {i!}}} {K_{{m_2} - i}}\left( {2\sqrt x } \right),
\end{equation}
and ${K_\upsilon}\left( z \right)$ is the modified Bessel function of the second kind with order $\upsilon $ \cite[eqn.8.407]{Tableofintegral}.
\end{proposition}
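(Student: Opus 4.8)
The plan is to translate the outage event directly into a statement about the product $h_{AS}h_{SA}$ and then to evaluate the survival function of that product. From (\ref{HTT-Ra1}) and (\ref{criterion}), an outage occurs precisely when $\log_2(1+\mu_1 h_{AS}h_{SA})<R$, i.e.\ when $h_{AS}h_{SA}<\upsilon_1/\mu_1$ with $\upsilon_1=2^{R}-1$. Hence $P_{out}^{HTT}=\Pr\{h_{AS}h_{SA}\le\theta\}$ with $\theta=\upsilon_1/\mu_1$, and since the claimed answer is $1-S_{m_{AS},m_{SA}}(\cdot)$ it is cleaner to work with the complement and evaluate $\Pr\{h_{AS}h_{SA}>\theta\}$ for the two independent gamma-distributed power gains.

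First I would condition on $h_{SA}=y$ and exploit independence to write $\Pr\{h_{AS}h_{SA}>\theta\}=\int_0^\infty f_{h_{SA}}(y)\,\Pr\{h_{AS}>\theta/y\}\,dy$. Because the fading severity parameters are assumed integer, the gamma tail collapses to a finite sum, $\Pr\{h_{AS}>\theta/y\}=\exp(-\beta_{AS}\theta/y)\sum_{i=0}^{m_{AS}-1}(\beta_{AS}\theta/y)^i/i!$. Inserting the PDF (\ref{PdfPowerGain}) of $h_{SA}$ and interchanging the finite sum with the integral leaves, for each $i$, a single integral of the form $\int_0^\infty y^{m_{SA}-1-i}\exp(-\beta_{SA}y-\beta_{AS}\theta/y)\,dy$. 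I condition on $h_{SA}$ precisely so that the surviving Bessel order will be $m_{SA}-i$ and the outer summation will run to $m_{AS}-1$, matching the claimed form of $S_{m_{AS},m_{SA}}$.

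The key analytic step is to evaluate that integral via the standard identity $\int_0^\infty t^{\nu-1}\exp(-a/t-bt)\,dt=2(a/b)^{\nu/2}K_\nu(2\sqrt{ab})$ \cite[eqn.3.471.9]{Tableofintegral}, applied with $\nu=m_{SA}-i$, $a=\beta_{AS}\theta$ and $b=\beta_{SA}$. A subtle point is that $\nu$ may be zero or negative when $m_{AS}>m_{SA}$; this causes no trouble because the integral converges for all real $\nu$ (the $\exp(-a/t)$ factor controls $t\to0$) and $K_{-\nu}=K_\nu$, so the closed form $K_{m_{SA}-i}$ remains valid verbatim. Collecting the prefactor $\beta_{SA}^{m_{SA}}/\Gamma(m_{SA})$, the per-term factor $(\beta_{AS}\theta)^i/i!$, and the factor $2(\beta_{AS}\theta/\beta_{SA})^{(m_{SA}-i)/2}$ produced by the identity, the exponents of $\beta_{SA}$ and of $\beta_{AS}\theta$ each consolidate to $(m_{SA}+i)/2$, so every term becomes $\frac{2}{\Gamma(m_{SA})\,i!}\,x^{(m_{SA}+i)/2}K_{m_{SA}-i}(2\sqrt{x})$ with $x=\beta_{AS}\beta_{SA}\theta=\beta_{AS}\beta_{SA}\upsilon_1/\mu_1$. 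This is exactly $S_{m_{AS},m_{SA}}(x)$, so $P_{out}^{HTT}=1-S_{m_{AS},m_{SA}}(x)$.

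I expect the main obstacle to be organizational rather than deep: choosing the correct variable to condition on, and then carefully tracking the exponents of $\beta_{AS}$, $\beta_{SA}$ and $\theta$ through the Bessel identity so that they coalesce into the single argument $x$. The only genuinely delicate technical point is the possibly negative Bessel order, which is handled by the parity $K_{-\nu}=K_\nu$.
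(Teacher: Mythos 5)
Your proof is correct and follows essentially the same route as the paper's: condition on $h_{SA}=y$, expand the integer-parameter gamma tail of $h_{AS}$ into the finite sum $\exp(-x)\sum_{i=0}^{m_{AS}-1}x^{i}/i!$ (the paper reaches the same point via the identity $\gamma(m,x)/\Gamma(m)=1-\exp(-x)\sum_{i}x^{i}/i!$ applied to the CDF), and evaluate each resulting integral with \cite[eqn.(3.471-9)]{Tableofintegral}; working with the complement $\Pr\{h_{AS}h_{SA}>\theta\}$ rather than subtracting at the end is only a cosmetic rearrangement. Your explicit remark that the Bessel order $m_{SA}-i$ may be nonpositive when $m_{AS}>m_{SA}$, handled by $K_{-\nu}=K_{\nu}$ and the convergence of the integral for all real orders, is a point the paper leaves implicit, and your exponent bookkeeping correctly consolidates to the single argument $\beta_{AS}\beta_{SA}\upsilon_1/\mu_1$.
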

\begin{proof}
Based on the analysis in Sec. II-C, the outage probability of HTT protocol can be expressed as
\begin{equation}\label{20}
\begin{split}
P_{out}^{HTT}&=\Pr\left( {{h_{AS}}{h_{SA}} < {\upsilon_1  \over \mu_1 }} \right)\\
&= \int_0^\infty  \Pr \left( {{h_{AS}} < {\upsilon_1  \over {\mu_1 y}}} \right){f_{{h_{SA}}}}\left( y \right)dy \\
&= \int_0^\infty  {{{\gamma ({m_{AS}},{{{\beta _{AS}}\upsilon_1 } \over {\mu_1 y}})} \over {\Gamma ({m_{AS}})}}} {f_{{h_{SA}}}}\left( y \right)dy.\\
\end{split}
\end{equation}
To achieve a closed-from expression for the last integral in (\ref{20}), we apply the following formula \cite[eqn.(8.352.6)]{Tableofintegral}
\begin{equation}
{{\gamma (m,x)} \over {\Gamma (m)}} = 1 - \exp ( - x)\sum\limits_{i = 0}^{m - 1} {{{{x^i}} \over {i!}}}.
\end{equation}
Then, we can further evaluate the integral in (\ref{20}) as
\begin{equation}\label{P1}
\begin{split}
  P_{out}^{HTT} &= \int_0^\infty  \left( {1 - \exp ( - {{{\beta _{AS}}{\upsilon _1}} \over {{\mu _1}y}})\sum\limits_{i = 0}^{{m_{AS}} - 1} {{{{{\left( {{{{\beta _{AS}}{\upsilon _1}} \over {{\mu _1}}}} \right)}^i}} \over {i!{y^i}}}} } \right)\\
  & \quad \times {f_{{h_{SA}}}}\left( y \right)dy   \\
  &  = 1 - {{{\beta _{SA}}^{{m_{SA}}}} \over {\Gamma ({m_{SA}})}}\sum\limits_{i = 0}^{{m_{AS}} - 1} {{{{{\left( {{{{\beta _{AS}}\upsilon_1 } \over \mu_1 }} \right)}^i}} \over {i!}}} \times \\
  &\quad \int_0^\infty  {{y^{{m_{SA}} - i - 1}}\exp \left( { - {{{\beta _{AS}}\upsilon_1 } \over {\mu_1 y}} - {\beta _{SA}}y} \right)dy} \\
  &  = 1 - {S_{{m_{AS}},{m_{SA}}}}\left( {{{{\beta _{AS}}{\beta _{SA}}{\upsilon _1}} \over {{\mu _1}}}} \right), \\
\end{split}
\end{equation}
where the last integral is solved by \cite[eqn.(3.471-9)]{Tableofintegral}.
\end{proof}

Note that the closed-form expressions for the exact outage probabilities of the HTC and AT protocols are analytically intractable due to the complicated structure of the received SNR. Motivated by this, we apply the approximation used in \cite{chen2014harvest} to approximate the received SNRs and derive approximate expressions for the outage probabilities of the HTC and AT protocols given in the following two propositions.
\begin{proposition}\label{propHTC}
An approximate closed form expression for the outage probability of the HTC protocol over Nakagami-m fading channels can be expressed as
\begin{equation}\label{HTCpout}
\begin{split}
  & {P_{out}^{HTC}} \approx   \\
  & 1 - {S_{{m_{AS}},{m_{SA}}}}\left( {{{{\beta _{AS}}{\beta _{SA}}\upsilon_2 } \over \mu_2 }} \right) - {S_{{m_{AR}},{m_{RA}}}}\left( {{{{\beta _{AR}}{\beta _{RA}}\upsilon_2 } \over \mu_2 }} \right) \times   \\
  & \left[ {{S_{{m_{SR}},{m_{AS}}}}\left( {{{{\beta _{SR}}{\beta _{AS}}\upsilon_2 } \over \mu_2 }} \right) - \Phi \left( {{\beta _{SA}}{\upsilon_2  \over \mu_2 },{\beta _{SR}}{\upsilon_2  \over \mu_2 }} \right)} \right], \\
  \end{split}
\end{equation}
where
\begin{equation}
\begin{split}
  & \Phi \left( {{x_1},{x_2}} \right) =   \\
  & {2 \over {\Gamma \left( {{m_{AS}}} \right)}}\sum\limits_{i = 0}^{{m_{SA}} - 1} {\sum\limits_{j = 0}^{{m_{SR}} - 1} {\left[ {{{{{\left( {{x_1}} \right)}^i}} \over {i!}}} \right.} {{{{\left( {{x_2}} \right)}^j}} \over {j!}}} {\left( {{x_1} + {x_2}} \right)^{{{{m_{AS}} - i - j} \over 2}}}  \\
  & \left. { \times {{\left( {{\beta _{AS}}} \right)}^{{{{m_{AS}} + i + j} \over 2}}}{K_{{m_{AS}} - i - j}}\left( {2\sqrt {\left( {{x_1} + {x_2}} \right){\beta _{AS}}} } \right)} \right], \\
\end{split}
\end{equation}
and ${\upsilon _2}{\rm{ = }}{2^{2R}} - 1$.
\end{proposition}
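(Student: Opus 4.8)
The plan is to begin from $P_{out}^{HTC}=\Pr\left(\gamma_{A,2}<\upsilon_2\right)$ with $\gamma_{A,2}=\max(\gamma_{SA,2},\gamma_{SRA,2})$, and to adopt the high-SNR approximation of \cite{chen2014harvest} for the AF relay branch. Since $\gamma_{SRA,2}=\gamma_{SR}\gamma_{RA}/(\gamma_{SR}+\gamma_{RA}+1)$ with $\gamma_{SR}=\mu_2 h_{AS}h_{SR}$ and $\gamma_{RA}=\mu_2 h_{AR}h_{RA}$, the approximation $\gamma_{SRA,2}\approx\min(\gamma_{SR},\gamma_{RA})$ lets me replace the event $\{\gamma_{SRA,2}\geq\upsilon_2\}$ by $\{h_{AS}h_{SR}\geq t\}\cap\{h_{AR}h_{RA}\geq t\}$, where $t=\upsilon_2/\mu_2$. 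It is cleaner to work with the complementary (no-outage) probability: abbreviating $A=h_{AS}h_{SA}$, $B=h_{AS}h_{SR}$ and $C=h_{AR}h_{RA}$, I have $1-P_{out}^{HTC}\approx\Pr\left(\{A\geq t\}\cup(\{B\geq t\}\cap\{C\geq t\})\right)$.

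I would then expand this by inclusion-exclusion as
\[
1-P_{out}^{HTC}\approx\Pr(A\geq t)+\Pr(B\geq t,C\geq t)-\Pr(A\geq t,B\geq t,C\geq t).
\]
The decisive observation is the independence structure of the gains: $C$ involves only the relay-AP hop and is independent of the pair $(A,B)$, which are themselves correlated through the shared factor $h_{AS}$. This factorizes the last two terms into $\Pr(B\geq t)\Pr(C\geq t)$ and $\Pr(A\geq t,B\geq t)\Pr(C\geq t)$. The three single-product tails follow at once from Proposition \ref{propHTT}: $\Pr(A\geq t)=S_{m_{AS},m_{SA}}(\beta_{AS}\beta_{SA}t)$, $\Pr(B\geq t)=S_{m_{SR},m_{AS}}(\beta_{SR}\beta_{AS}t)$ and $\Pr(C\geq t)=S_{m_{AR},m_{RA}}(\beta_{AR}\beta_{RA}t)$. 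Factoring out $\Pr(C\geq t)$ reproduces the bracketed structure of the claim, leaving only the correlated joint tail.

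The main obstacle is this joint tail $\Pr(A\geq t,B\geq t)=\Pr(h_{AS}h_{SA}\geq t,\,h_{AS}h_{SR}\geq t)$, which Proposition \ref{propHTT} does not cover because $A$ and $B$ both contain $h_{AS}$. The plan is to condition on $h_{AS}=x$, after which $\{h_{SA}\geq t/x\}$ and $\{h_{SR}\geq t/x\}$ are independent, giving
\[
\Pr(A\geq t,B\geq t)=\int_0^\infty\Pr(h_{SA}\geq t/x)\,\Pr(h_{SR}\geq t/x)\,f_{h_{AS}}(x)\,dx.
\]
Expanding each complementary Nakagami-m CDF via the finite series $\gamma(m,z)/\Gamma(m)=1-e^{-z}\sum_{i=0}^{m-1}z^i/i!$ used in Proposition \ref{propHTT} produces a double sum over $i,j$, each term of which contains an integral of the form $\int_0^\infty x^{m_{AS}-i-j-1}\exp(-(x_1+x_2)/x-\beta_{AS}x)\,dx$ with $x_1=\beta_{SA}t$ and $x_2=\beta_{SR}t$. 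Evaluating these by \cite[eqn.(3.471-9)]{Tableofintegral} brings in the factor $K_{m_{AS}-i-j}(2\sqrt{(x_1+x_2)\beta_{AS}})$, and after collecting the powers of $\beta_{AS}$ the entire sum is precisely $\Phi(\beta_{SA}\upsilon_2/\mu_2,\beta_{SR}\upsilon_2/\mu_2)$. Substituting the four pieces back and rearranging yields the stated approximate closed-form expression.
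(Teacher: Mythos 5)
Your proposal is correct and takes essentially the same route as the paper's Appendix A: your inclusion--exclusion on the no-outage event is simply the complementary form of the identity $\Pr\left( {A,B} \right) = \Pr\left( A \right) - \Pr\left( {A,\overline B } \right)$ used there, and every substantive step --- the $\min$ approximation of $\gamma_{SRA,2}$, factoring out the independent $h_{AR}h_{RA}$ tail, conditioning on the shared $h_{AS}$ to handle the correlated pair, the finite incomplete-gamma series expansion, and the evaluation via \cite[eqn.(3.471-9)]{Tableofintegral} producing $\Phi$ --- matches the paper's proof. The assembled expression agrees term by term with the stated result, so there is no gap.
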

\begin{proof}
See Appendix A.
\end{proof}
\begin{proposition}\label{propAT}
An approximate closed form expression for the outage probability of the proposed AT protocol over Nakagami-m fading channels is given by
\begin{equation}
\begin{split}
{P_{out}^{AT}} \approx & 1{\rm{ - }}{S_{{m_{AS}},{m_{SA}}}}\left( {{{{\beta _{AS}}{\beta _{SA}}{\upsilon _1}} \over {{\mu _1}}}} \right) \\
&- {S_{{m_{AR}},{m_{RA}}}}\left( {{{{\beta _{AR}}{\beta _{RA}}{\upsilon _2}} \over {{\mu _2}}}} \right)\times \\
& \left[ {{S_{{m_{SR}},{m_{AS}}}}\left( {{{{\beta _{SR}}{\beta _{AS}}{\upsilon _2}} \over {{\mu _2}}}} \right) - \Phi \left( {{\beta _{SA}}{{{\upsilon _1}} \over {{\mu _1}}},{\beta _{SR}}{{{\upsilon _2}} \over {{\mu _2}}}} \right)} \right].\\
\end{split}
\end{equation}
\end{proposition}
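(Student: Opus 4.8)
The plan is to handle the AT protocol by conditioning on the adaptive switching rule established in Sec.~II-C. Recall that the AP selects HTT whenever $h_{AS}h_{SA} > \upsilon_1/\mu_1$ and HTC otherwise, and that by construction the HTT branch is chosen precisely on the event where no outage can occur. Writing the total outage probability via the law of total probability over these two disjoint events, the HTT branch contributes exactly zero, so that $P_{out}^{AT} = \Pr\!\left(h_{AS}h_{SA} \le \upsilon_1/\mu_1,\ \gamma_{A,2} < \upsilon_2\right)$; i.e.\ an outage occurs only when the AP has switched to cooperation and the HTC output SNR still falls below the target.

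The key preliminary step is to compare the two thresholds. Using $\mu_2 = 2\mu_1$, $\upsilon_1 = 2^R-1$ and $\upsilon_2 = 2^{2R}-1$, one checks that $\upsilon_2/\mu_2 = \tfrac{2^R+1}{2}\cdot\upsilon_1/\mu_1 > \upsilon_1/\mu_1$ for any $R>0$. Consequently the switching region $\{h_{AS}h_{SA} \le \upsilon_1/\mu_1\}$ is contained in $\{h_{AS}h_{SA} < \upsilon_2/\mu_2\}$, which is exactly the event that the HTC direct link $\gamma_{SA,2} = \mu_2 h_{AS}h_{SA}$ is in outage. Hence, whenever HTC is invoked its direct link has already failed, and the outage reduces to failure of the relayed link: $P_{out}^{AT} = \Pr\!\left(h_{AS}h_{SA} \le \upsilon_1/\mu_1,\ \gamma_{SRA,2} < \upsilon_2\right)$.

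From here I would reuse the machinery of Appendix~A. First apply the high-SNR amplify-and-forward approximation $\gamma_{SRA,2} \approx \mu_2\min(h_{AS}h_{SR}, h_{AR}h_{RA})$ used for Proposition~\ref{propHTC}, turning the relayed-link outage into $\min(h_{AS}h_{SR},h_{AR}h_{RA}) < \upsilon_2/\mu_2$. Then decompose the resulting joint probability by complementation and exploit the independence of the second hop $h_{AR}h_{RA}$ from $\{h_{AS},h_{SA},h_{SR}\}$, which peels off the factor $S_{m_{AR},m_{RA}}(\beta_{AR}\beta_{RA}\upsilon_2/\mu_2)$. The remaining term $\Pr(h_{AS}h_{SA} \le \upsilon_1/\mu_1,\ h_{AS}h_{SR} \ge \upsilon_2/\mu_2)$ is evaluated by conditioning on the shared gain $h_{AS}$ and integrating the product of the two complementary CDFs against its PDF, yielding $S_{m_{SR},m_{AS}}(\beta_{SR}\beta_{AS}\upsilon_2/\mu_2) - \Phi(\beta_{SA}\upsilon_1/\mu_1,\ \beta_{SR}\upsilon_2/\mu_2)$. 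Collecting the pieces reproduces the claimed expression, with $1 - S_{m_{AS},m_{SA}}(\beta_{AS}\beta_{SA}\upsilon_1/\mu_1)$ accounting for the probability of the switching event itself.

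The step I expect to be the main obstacle is the same one that drives Appendix~A: the gain $h_{AS}$ is shared between the direct link $\gamma_{SA,2}$ and the first cooperative hop $h_{AS}h_{SR}$, so the two events are statistically dependent and the joint probability does not factor. This is exactly what $\Phi$ captures, via a conditional integration over $h_{AS}$ of a product of finite-sum incomplete-gamma tails that resolves into modified Bessel functions. The only genuinely new bookkeeping relative to Proposition~\ref{propHTC} is that the direct-link threshold is now $\upsilon_1/\mu_1$ rather than $\upsilon_2/\mu_2$; tracking this asymmetry is what replaces the symmetric argument $\Phi(\beta_{SA}\upsilon_2/\mu_2,\beta_{SR}\upsilon_2/\mu_2)$ of the HTC case by $\Phi(\beta_{SA}\upsilon_1/\mu_1,\beta_{SR}\upsilon_2/\mu_2)$, and likewise changes the argument of the first $S$-term.
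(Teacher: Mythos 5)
Your proposal is correct and follows essentially the same route as the paper's Appendix B: conditioning on the switching event, showing $\upsilon_1/\mu_1 < \upsilon_2/\mu_2$ so the direct-link outage condition is redundant on that event, invoking the $\min$ approximation of Appendix A, and applying $\Pr(A,B)=\Pr(A)-\Pr(A,\overline{B})$ together with the independence of $h_{AR}h_{RA}$ and conditioning on the shared gain $h_{AS}$ to produce the asymmetric term $\Phi\left(\beta_{SA}\upsilon_1/\mu_1,\,\beta_{SR}\upsilon_2/\mu_2\right)$. The only cosmetic differences are your cleaner threshold comparison via $\upsilon_2=\upsilon_1(2^R+1)$ and $\mu_2=2\mu_1$ (the paper computes the difference $\upsilon_2/\mu_2-\upsilon_1/\mu_1$ directly) and the order in which the approximation and the event reduction are applied.
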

\begin{proof}
See Appendix B.
\end{proof}

Given a fixed data rate $R$ and the approximate outage probability, we can express the average throughput of the HTT, HTC and AT protocols as
\begin{equation}\label{throughput}
{{\Psi }_{Z}} \approx R\left( {1 - {P_{out}^{Z}}} \right)\left( {1 - \tau } \right),
\end{equation}
where $Z \in \left\{ {HTT,HTC,AT} \right\}$.
\begin{remark}
From the above analysis, it can be concluded that the outage probability is inversely proportional to the received SNR ${{\gamma _{A,3}}}$ and ${{\mu _1}}$ for the AT protocol. Furthermore, the received SNR given in (\ref{rA3}) is proportional to ${{\mu _2}}$ defined in (\ref{miu2}). Both ${{\mu _1}}$ and ${{\mu _2}}$ are proportional to the allocated time $\tau$. Thus, the outage probability of the AT protocol is inversely proportional to $\tau$. This observation is understandable since the more time allocated for DL energy transfer, the more energy the source and relay can harvest. This guarantees a higher transmit power, which results in a lower outage probability. However, the average throughput is also proportional to the information transmission time $1-\tau$. Jointly considering the above analysis, we can deduce that there must be an optimal value for the DL energy transfer time $\tau $ between 0 and 1 such that the average throughput is maximized. In the future work, this could be an interesting work to derive. Same results can be obtained for HTT and HTC protocols.
\end{remark}
\section{Numerical Results}
In this section, we present some numerical results to illustrate and validate the above theoretical analyses. We consider the scenario that ${m_{XY}}= m$ for simplicity but their average powers $\Omega _{XY}$ may differ from each other. In order to capture the effect of path-loss, we use the model that ${\Omega _{XY}} = {10^{ - 3}}{\left( {{d_{XY}}} \right)^{ - \alpha }}$, where ${d_{XY}}$ denotes the distance between nodes $X$ and $Y$, and $\alpha  \in \left[ {2,5} \right]$ is the path-loss factor \cite{path-loss}. We consider a linear topology that the relay is located on a straight line between the source and hybrid AP. In all following simulations, we set the distance between the AP and source ${d_{AS}} = 10m$, the path-loss factor $\alpha  = 2$, the noise power ${N_0} =  - 80dBm$ and the energy conversion efficiency $\eta  = 0.5$.

We first compare the analytical throughput derived in above sections with the Monte Carlo simulation results. The throughput of HTT, HTC and the proposed AT protocols versus the AP transmit power is shown in Fig.\ref{fig:montsimu1} and Fig.\ref{fig:montsimu2}. We can see that the simulation results agree with the exact expression of throughput for the HTT protocol and the derived approximate expressions of throughput for the HTC and AT protocols become very tight at medium and high transmit power conditions. This numerical results validate our theoretical results presented in Proposition 1-3. Also, it can be observed that the proposed AT protocol outperforms the HTC and HTT protocols for all transmit powers.
\begin{figure}
\centering \scalebox{0.5}{\includegraphics{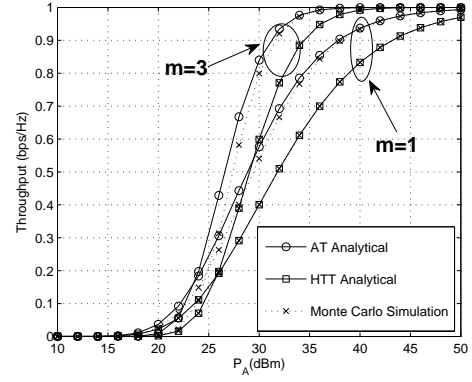}}
\caption{Average throughput of AT and HTT protocols versus the AP transmit power for $m=1$ and $m=3$, where ${d_{AR}}=5$, $\tau=0.5$ and $R=2$.\label{fig:montsimu1}}
\end{figure}
\begin{figure}
\centering \scalebox{0.5}{\includegraphics{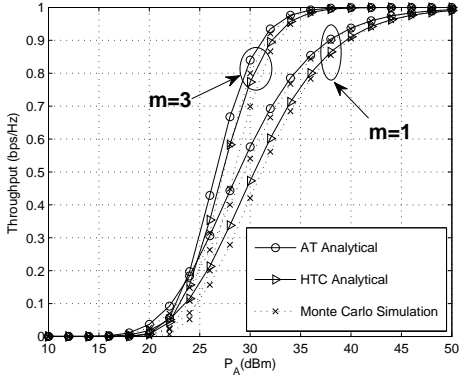}}
\caption{Average throughput of AT and HTC protocols versus the AP transmit power for $m=1$ and $m=3$, where ${d_{AR}}=5$, $\tau=0.5$ and $R=2$.\label{fig:montsimu2}}
\end{figure}

We will only plot the analytical results in the remaining figures since the theoretical analyses agree with the simulations when the AP transmit power is high enough. In Fig.\ref{fig:TdifferentSNR}, we plot the throughput curves versus DL energy transfer time $\tau$ with different transmit power ${P_A} = 35,40dBm$ respectively. It validates our analysis in Remark 1 that there exists an optimal $\tau$ that maximizes the average throughput in all cases. But, the optimal DL energy transfer time may be not the same for different protocols. In these three protocols, the optimal energy transfer time of the proposed AT is the smallest, which means that the AT protocol consumes the least energy at the AP. Moreover, the higher the transmit power at the AP, the smaller the optimal value of $\tau$. This is because that the source can harvest the same amount of energy with shorter time when the transmit power increases. Hence, the optimal $\tau$ will shift to the left as the value of $P_A$ increases. We can also see from Fig.\ref{fig:TdifferentSNR} that with the optimal DL energy transfer time, the proposed AT protocol can introduce a significant performance gain compared with both HTT and HTC protocols.
\begin{figure}
\centering \scalebox{0.5}{\includegraphics{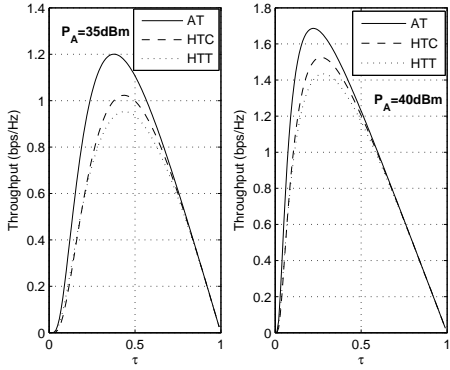}}
\caption{Average throughput of the three protocols versus $\tau $ for different $P_A$, where $R=2.5$, $d_{AR}=5$, and $m=2$. \label{fig:TdifferentSNR}}
\end{figure}

\begin{figure}
\centering \scalebox{0.5}{\includegraphics{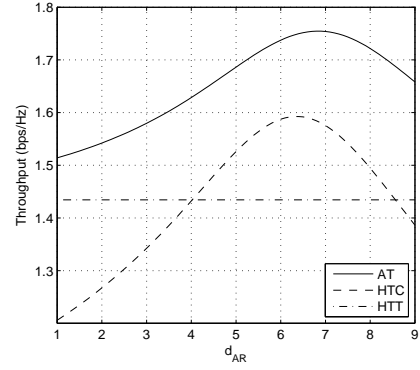}}
\caption{Average throughput of the three protocols versus $d_{AR}$, where $P_A=40dBm$, $m=2$, $R =2.5$ and $\tau$ is at its optimal value. \label{fig:distance}}
\end{figure}

Fig.~\ref{fig:distance} illustrates the effect of relay location on the average throughput in which the throughput curves are plotted versus ${d_{AR}}$ with the optimal values of $\tau$. Note that the optimal value of $\tau$ can be easily found by one-dimension exhaustive search. Since HTT protocol does not have the external relay, the throughput is plotted as a straight line for comparison. We can see that the throughput for the proposed AT protocol is larger than that of the HTC and HTT protocols in all distance cases. Furthermore, the relay should be placed close to the source in order to achieve maximum throughput for HTC and AT protocols. Finally, jointly considering Figs. 3-6, we can conclude that the proposed AT protocol is superior to HTT and HTC protocols under all considered cases.

%
\section{Conclusions}
In this paper, we proposed an adaptive transmission (AT) protocol for wireless-powered cooperative communication networks. We derived the outage probability and average throughput of the proposed AT protocol over Nakagami-m fading channels. In order to compare the AT protocol with the existing HTT and HTC protocols, we also analyzed the performances of HTT and HTC  protocols over Nakagami-m fading channels. Numerical results validated our analyses and shown that the proposed AT protocol outperforms the HTC and HTT protocols under all simulations.
\appendices
\section{Proof of proposition 2}
For the HTC protocol, the mutual information between the source and hybrid AP is given by
\begin{equation}\label{htc-mutual}
{I_{SA}^{HTC}} = {1 \over 2}{\log _2}\left( {1 + {\gamma _{A,2}}} \right).
\end{equation}
The outage probability can thus be expressed as
\begin{equation}\label{outageHTC}
\begin{split}
{P_{out}^{HTC}} &= \Pr\left( {{I_{SA}^{HTC}} < R} \right)\\
&= \Pr({\gamma _{A,2}} < \upsilon_2 ) = \Pr({\gamma _{SA,2}} < \upsilon_2 ,{\gamma _{SRA,2}} < \upsilon_2 ).
\end{split}
\end{equation}
We make the following approximation to the expression of ${\gamma _{SRA,2}}$ given in (\ref{Rsra}) in order to calculate the outage probability \cite{chen2014harvest}
\begin{equation}\label{approRSRA}
\begin{split}
{\gamma _{SRA,2}} &\approx {{\mu_2 {h_{AS}}{h_{SR}}{h_{AR}}{h_{RA}}} \over {{h_{AS}}{h_{SR}} + {h_{AR}}{h_{RA}}}} \\
&\approx \mu_2 \min \left( {{h_{AS}}{h_{SR}},{h_{AR}}{h_{RA}}} \right).\\
\end{split}
\end{equation}
Based on the approximation in (\ref{approRSRA}), we can obtain an approximate expression of outage probability for the HTC protocol
\begin{equation}\label{approPout}
\begin{split}
&{P_{out}^{HTC}}\approx \Pr\left( {{h_{AS}}{h_{SA}} < {\upsilon_2  \over \mu_2 },\min \left( {{h_{AS}}{h_{SR}},{h_{AR}}{h_{RA}}} \right) < {\upsilon_2  \over \mu_2 }} \right) \\
&= \Pr\left( {{h_{AS}}{h_{SA}} < {\upsilon_2  \over \mu_2 }} \right) - \Pr\left( {{h_{AR}}{h_{RA}} > {\upsilon_2  \over \mu_2 }} \right) \times \\
& \quad \Pr\left( {{h_{AS}}{h_{SA}} < {\upsilon_2  \over \mu_2 },{h_{AS}}{h_{SR}} > {\upsilon_2  \over \mu_2 }} \right),
\end{split}
\end{equation}
where the second equality follows the fact that $\Pr \left( {A,B} \right) = \Pr \left( A \right) - \Pr \left( {A,\overline B } \right)$ \cite[pp.11]{prob.P11-1.7}.
Now we calculate the three probability terms in the last step of (\ref{approPout}). Similarly as (\ref{pouthtt}), the first and second terms can be calculated as
\begin{equation}\label{P1}
 \Pr\left( {{h_{AS}}{h_{SA}} < {{{\upsilon _2}} \over {{\mu _2}}}} \right)= 1 - {S_{{m_{AS}},{m_{SA}}}}\left( {{{{\beta _{AS}}{\beta _{SA}}{\upsilon _2}} \over {{\mu _2}}}} \right),
\end{equation}
\begin{equation}\label{P2}
\Pr\left( {{h_{AR}}{h_{RA}} > {\upsilon_2  \over \mu_2 }} \right)= {S_{{m_{AR}},{m_{RA}}}}\left( {{{{\beta _{AR}}{\beta _{RA}}{\upsilon _2}} \over {{\mu _2}}}} \right).
\end{equation}
For the last term, we have
\begin{equation}\label{P3}
\begin{split}
&\Pr\left( {{h_{AS}}{h_{SA}} < {\upsilon_2  \over \mu_2 },{h_{AS}}{h_{SR}} > {\upsilon_2  \over \mu_2 }} \right) \\
&= \int_0^\infty  {P({h_{SA}}}  < {\upsilon_2  \over {\mu_2 y}})P({h_{SR}} > {\upsilon_2  \over {\mu_2 y}}){f_{{h_{AS}}}}\left( y \right)dy \\
&= \underbrace {\int_0^\infty  {P({h_{SR}} > {\upsilon_2  \over {\mu_2 y}})} {f_{{h_{AS}}}}\left( y \right)dy}_{{I_1}} - \\
&\underbrace {\int_0^\infty  {P({h_{SA}}}  > {\upsilon_2  \over {\mu_2 y}})P({h_{SR}} > {\upsilon_2  \over {\mu_2 y}}){f_{{h_{AS}}}}\left( y \right)dy}_{{I_2}}.
\end{split}
\end{equation}
Similarly as (\ref{pouthtt}), ${I_1}$ can be calculated as
\begin{equation}
{I_1} = {S_{{m_{SR}},{m_{AS}}}}\left( {{{{\beta _{SR}}{\beta _{AS}}{\upsilon _2}} \over {{\mu _2}}}} \right).
\end{equation}
${I_2}$ can be evaluated as
\begin{equation}
\begin{split}
  {I_2} &= {{{\beta _{AS}}^{{m_{AS}}}} \over {\Gamma ({m_{AS}})}}\sum\limits_{i = 0}^{{m_{SA}} - 1} {\sum\limits_{j = 0}^{{m_{SR}} - 1} {{{{{\left( {{{{\beta _{SA}}{\upsilon _2}} \over {{\mu _2}}}} \right)}^i}} \over {i!}}{{{{\left( {{{{\beta _{SR}}{\upsilon _2}} \over {{\mu _2}}}} \right)}^j}} \over {j!}}} }  \times   \\
  & \int_0^\infty  {{y^{{m_{AS}} - i - j - 1}}\exp \left( { - {{\left( {{\beta _{SA}} + {\beta _{SR}}} \right){{{\upsilon _2}} \over {{\mu _2}}}} \over y} - {\beta _{AS}}y} \right)dy}   \cr
  &  = \Phi \left( {{\beta _{SA}}{{{\upsilon _2}} \over {{\mu _2}}},{\beta _{SR}}{{{\upsilon _2}} \over {{\mu _2}}}} \right), \\
\end{split}
\end{equation}
where the last integral is solved by \cite[eqn.(3.471-9)]{Tableofintegral}. Substituting (\ref{P1}), (\ref{P2}), (\ref{P3}) to (\ref{approPout}), we obtain the desired result in Proposition \ref{propHTC}. This completes the proof.

\section{Proof of proposition 3}
In the AT protocol, outage will only happen when the HTC protocol is employed under the channel information condition ${h_{AS}}{h_{SA}} \le {{{v_1}} \over {{\mu _1}}}$ and the corresponding mutual information is less than the required rate $R$ at the same time. Mathematically, the outage probability of the AT protocol can thus be expressed as
\begin{equation}\label{outageAT}
\begin{split}
 {P_{out}^{AT}} &= \Pr \left( {{I_{SA}^{HTC}} < R,{h_{AS}}{h_{SA}} \le {{{v_1}} \over {{\mu _1}}}} \right)\\
 &\approx \Pr\left( {{h_{AS}}{h_{SA}} < {\upsilon_2  \over \mu_2 },\min \left( {{h_{AS}}{h_{SR}},{h_{AR}}{h_{RA}}} \right) < {\upsilon_2  \over \mu_2 }} \right. ,\\
& \quad \left. {{h_{AS}}{h_{SA}} < {{{v_1}} \over {{\mu _1}}}} \right).\\
\end{split}
\end{equation}
To further simplify the above formula, we compare the terms ${{{v_1}} \over {{\mu _1}}}$ and ${{{v_2}} \over {{\mu _2}}}$ and obtain
\begin{equation}
\frac{{{\upsilon _2}}}{{{\mu _2}}} - \frac{{{\upsilon _1}}}{{{\mu _1}}} = \frac{1}{{{\mu _1}}}\left( {{2^{2R - 1}} - {2^R} + \frac{1}{2}} \right) > 0,
\end{equation}
where the last inequality follows since ${{2^{2R - 1}} - {2^R}} > -\frac{1}{2}$ holds when $R>0$. Thus, ${{{v_1}} \over {{\mu _1}}} < {{{v_2}} \over {{\mu _2}}}$ for all $R>0$ and the outage probability of the AT protocol can be further evaluated as
\begin{equation}\label{poutAT}
\begin{split}
{P_{out}^{AT}} &\approx \Pr \left( {\min \left( {{h_{AS}}{h_{SR}},{h_{AR}}{h_{RA}}} \right) < {{{v_2}} \over {{\mu _2}}},{h_{AS}}{h_{SA}} < {{{v_1}} \over {{\mu _1}}}} \right)  \\
  &  = \Pr \left( {{h_{AS}}{h_{SA}}< {{{v_1}} \over {{\mu _1}}}} \right) - \Pr \left( {{h_{AR}}{h_{RA}} > {{{v_2}} \over {{\mu _2}}}} \right) \times \\
  &\quad \Pr \left( {{h_{AS}}{h_{SR}} > {{{v_2}} \over {{\mu _2}}},{h_{AS}}{h_{SA}} < {{{v_1}} \over {{\mu _1}}}} \right). \\
\end{split}
\end{equation}
With the similar methods used in Appendix A, we can evaluate the last three probability terms in (\ref{poutAT}) and obtain the desired expression given in Proposition \ref{propAT}.

\ifCLASSOPTIONcaptionsoff
  \newpage
\fi

\bibliographystyle{IEEEtran}
\bibliography{References}

\end{document}